\documentclass[12pt,twoside,english]{article}
\usepackage{mathptmx}

\usepackage[T1]{fontenc}
\usepackage[latin9]{inputenc}
\usepackage{geometry}
\geometry{verbose,tmargin=1in,bmargin=1in,lmargin=1in,rmargin=1in}
\pagestyle{plain}
\usepackage{babel}
\usepackage{array}
\usepackage{multirow}
\usepackage{amsmath}
\usepackage{amsthm}
\usepackage{setspace}
\doublespacing
\usepackage[unicode=true,pdfusetitle,
 bookmarks=true,bookmarksnumbered=false,bookmarksopen=false,
 breaklinks=false,pdfborder={0 0 1},backref=false,colorlinks=false]
 {hyperref}

\makeatletter

\providecommand{\tabularnewline}{\\}

\theoremstyle{plain}
\newtheorem{thm}{\protect\theoremname}
\theoremstyle{plain}
\newtheorem{assumption}{\protect\assumptionname}
\theoremstyle{remark}
\newtheorem{rem}{\protect\remarkname}

\usepackage{setspace}
\usepackage{footmisc}

\usepackage{etoolbox}
\patchcmd{\thanks}{#1}{\protect\doublespacing\normalsize#1}{}{}

\renewcommand{\footnotesize}{\normalsize}

\makeatother

\providecommand{\assumptionname}{Assumption}
\providecommand{\remarkname}{Remark}
\providecommand{\theoremname}{Theorem}

\begin{document}
\title{A dynamic ordered logit model with fixed effects}
\author{Chris Muris, Pedro Raposo, and Sotiris Vandoros\thanks{Muris: Department of Economics, McMaster University. Contact: muerisc@mcmaster.ca.
Raposo: Católica Lisbon School of Business and Economics. Contact:
pedro.raposo@ucp.pt. Vandoros: King\textquoteright s College London
and Harvard University. Contact: s.vandoros@kcl.ac.uk. The data were
made available by Eurostat (Contract RPP 132-2018-EU-SILC). We are
grateful to Irene Botosaru and Krishna Pendakur for very helpful discussions.}}
\date{August 4, 2020}
\maketitle
\begin{abstract}
We study a fixed-$T$ panel data logit model for ordered outcomes
that accommodates fixed effects and state dependence. We provide identification
results for the autoregressive parameter, regression coefficients,
and the threshold parameters in this model. Our results require only
four observations on the outcome variable. We provide conditions under
which a composite conditional maximum likelihood estimator is consistent
and asymptotically normal. We use our estimator to explore the determinants
of self-reported health in a panel of European countries over the
period 2003-2016. We find that: (i) the autoregressive parameter is
positive and analogous to a linear AR(1) coefficient of about 0.25,
indicating persistence in health status; (ii) the association between
income and health becomes insignificant once we control for unobserved
heterogeneity and persistence.
\end{abstract}

\section{Introduction}

Certain individual-level conditions may tend to persist over time,
in the sense that a condition has a memory of a previous period\textquoteright s
state, or may involve an element of adaptation. Furthermore, the way
individuals experience the same condition may vary, and they may also
have a different understanding of how this is measured. A common example
that fulfils these characteristics, and which is used extensively
in the literature, is self-reported health status. Health status often
depends on its value in the previous period, as health conditions
may persist over time, given that recovery can take long, and that
an illness may even have permanent effects. For example, Table \ref{tab:transition-uk}
presents a \textit{\emph{transition matrix}}\textit{ }for self-reported
health status in the United Kingdom for the period 2003-2016.\footnote{More information about the data and source is in Section \ref{sec:health},
where we analyze this data using the methodology proposed in this
paper.}
\begin{table}
\begin{centering}
\begin{tabular}{ccccccc}
\hline 
 &  & \multicolumn{5}{c}{$P\left(\left.Y_{i,t+1}=y'\right|Y_{i,t}=y\right)$}\tabularnewline
 & $y/y'$ & 1 & 2 & 3 & 4 & 5\tabularnewline
\hline 
$P\left(Y_{i,t}=y\right)$ & 1 & 36.48 & 43.40 & 13.84 & 5.03 & 1.26\tabularnewline
 & 2 & 10.23 & 44.44 & 35.38 & 8.77 & 1.17\tabularnewline
 & 3 & 0.88 & 10.23 & 52.18 & 30.69 & 6.02\tabularnewline
 & 4 & 0.15 & 1.08 & 14.87 & 59.74 & 24.16\tabularnewline
 & 5 & 0.08 & 0.29 & 4.18 & 33.38 & 62.07\tabularnewline
\hline 
\end{tabular}
\par\end{centering}
\caption{Current and future self-reported health, United Kingdom.}
\label{tab:transition-uk}
\end{table}
 For individuals that report a value of current health in a given
year (rows, on a 5-point scale with 5 being the highest), it shows
the relative proportion of those that report a certain value in the
subsequent year (columns). A striking feature of this transition matrix
is that a lot of mass is on or near the main diagonal. This feature
is found across all countries in our analysis. In other words, self-reported
health status is persistent: individuals tend to stay in the same
level of health.

There are at least two explanations for this observed persistence
(Heckman, 1981; Honoré and Kyriazidou, 2000): unobserved heterogeneity
and state dependence. Consider first \emph{unobserved heterogeneity.
}It\emph{ }refers to unobservable characteristics that affect the
propensity to report higher health status. Unobserved heterogeneity
is important in the literature on health status, because self-reported
health has been used extensively in the literature as a measure of
health outcomes (see for example Bound and Waidmann, 1992; Banerjee
et al., 2004; Gravelle and Sutton, 2009; McInerney and Mellor, 2012).
It is often viewed as a limitation that self-reported measures are
subjective. For example, reporting one's own health may depend on
cultural factors (Jylhä et al., 1998; Baron-Epel et al., 2005; Jürges,
2007), and people may have a different understanding of reference
points for health (Groot, 2000; Sen, 2002). Previous studies have
used vignettes to address cross-country differences in reporting of
health and disability (King et al., 2004; Salomon et al., 2004; Kapteyn
et al., 2007). However, the issue with unobserved heterogeneity across
individuals remains. As a result, it is important to take into account
the role of unobserved heterogeneity when analyzing self-reported
health data. The appropriate econometric approach to this is to allow
for fixed effects.\footnote{Studies have long debated the accuracy and reliability of subjective
measures of health, such as self-reported health status (see for example
Butler et al., 1987; Lindeboom and van Doorslaer, 2004; Johnston et
al., 2009). As an alternative response to these concerns, a number
of objective measures of health have been included in household surveys.
These include blood pressure, BMI, the number of medicines taken (Health
Survey England, 2019), the number of sick days off work, the number
of days hospitalised (BHPS, 2019) etc. Some household surveys ask
respondents to perform a task such as walking across the room or buttoning
a shirt to capture any limitations (SHARE, 2019). Indexes such as
the EQ-5D index are being used to cover different types of conditions
and merge them into a single measure. The Euro-D scale measures mental
health, and the CASP-12 index captures quality of life. However, these
objective measures are often very specific to particular diseases,
and even when creating a relevant index it may be impossible to include
and accurately reflect all conditions. As such, while objective measures
may accurately capture some health conditions, they have serious limitations
in capturing the overall picture of one\textquoteright s health.} 

A number of studies have found a positive association between income
and health (Carrieri and Jones, 2017; Ettner, 1996; Frijters et al.,
2005; Mackenbach et al., 2005), but empirical evidence of a strong
effect is sometimes limited (Larrimore, 2011; Gunasekara, 2011; Johnston
et al., 2009). Nevertheless, the literature on the impact of economic
downturns (which mean reduced income) has previously demonstrated
positive effects of unemployment on health (Ruhm, 2000; Ruhm and Black,
2002), and more recently, no effect (Ruhm, 2015). What also appears
to matter, at least in terms of happiness, apart from absolute income,
is also relative income, i.e. how one\textquoteright s income compares
to that of those around them (Frijters et al., 2008). The relationship
between income and health is endogenous and complex, and both can
be correlated with other factors, that are not always measured and
included in empirical models. For example, Gunasekara et al. (2011)
found that when controlling for unmeasured confounders, the association
between the two becomes weaker. With regards to self-reported hypertension
in particular, Johnston et al., (2009) found no link to income --
something that did change when using objective measures. In our paper,
using models that do not control for individual unobserved heterogeneity
yields a positive and statistically significant coefficient (Table
3). However, this becomes insignificant when using fixed effects,
suggesting that there are other factors that potentially drive the
association between the two variables.

Consider now the second explanation for the observed persistence in
health outcomes: \emph{state dependence}. It refers to the possibility
that past self-reported health status may be related to current self-reported
health status even after conditioning on unobserved heterogeneity.
State dependence arises if actual (as opposed to self-reported) health
shocks are persistent, in the sense that a shock on health can have
a long-lasting effect (a typical example is injury leading to disability).
Contoyannis et al. (2004), using a random effects approach, found
evidence for such persistence in respondents of the British Household
Panel survey.

State dependence in self-reported health can also arise due to adaptation:
self-reported health status may change over time for a person whose
actual health has not changed. People tend to adapt to good or bad
developments in life. According to the Global Adaptive Utility Model,
individuals reallocate weights on various domains of life in order
to maintain their previous level of utility (Bradford and Dolan, 2010).
Similarly, the AREA model developed by Wilson and Gilbert (2008),
suggests that attention is focused on a change, followed by reaction,
explanation, and, finally, adaptation. This also applies to health,
as health status tends to improve even when individuals' health has
actually not experienced any objective change (Daltroy et al., 1999;
Damschroder et al., 2005), and time since diagnosis is positively
associated with self-reported health (Cubí-Mollá et al., 2017). Whether
persistence or adaptation, or both, characterise a variable, this
calls for a dynamic element in a model. 

Overall, the challenges with studying self-reported health status
is that (a) people with the same actual health status might be reporting
different health levels; and (b) health shocks can have a lasting
effect. Against this background, we propose and analyze a panel data
ordered logit model that includes both fixed effects and a lagged
dependent variable. This allows a researcher faced with panel data
and an ordinal outcome variable to disentangle unobserved heterogeneity
from state dependence, and to quantify state dependence. Thus, we
address the limitations of using self-reported health as a proxy for
individuals' health. Our contribution is important for studies using
subjective health measures as it can help correct biases that naturally
occur when using this type of measure.\footnote{For example, happiness is perceived and reported differently across
individuals and people adapt to things that make them happy (Layard,
2006), while shocks on happiness can have a scarring effect on next
periods (Clark et al., 2001).}

Specifically, we study the \textit{dynamic ordered logit model with
fixed effects:}
\begin{align}
Y_{i,t}^{*} & =\alpha_{i}+X_{i,t}\beta+\rho1\left\{ Y_{i,t-1}\geq k\right\} -U_{i,t},\,t=1,2,3,\label{eq:latent_variable}\\
Y_{i,t} & =\begin{cases}
1 & \text{if }Y_{i,t}^{*}<\gamma_{2},\\
2 & \text{if }\gamma_{2}\leq Y_{i,t}^{*}<\gamma_{3},\\
\vdots\\
J & \text{if }Y_{i,t}^{*}\geq\gamma_{J},
\end{cases}\label{eq:thresholds}\\
\left.U_{i,t}\right| & \left(\alpha_{i},X_{i},Y_{i,<t}\right)\sim LOG(0,1),\,t=1,2,3,\label{eq:strict_exogeneity}
\end{align}
where $2\leq k\leq J$ is a fixed and known cutoff for the lagged
dependent variable. The person-specific parameter $\alpha_{i}$ captures
unobserved heterogeneity, which we allow to be correlated with the
other quantities in the model in an unrestricted way (fixed effects).
The time-varying covariates $X_{i,t}$ are collected across time periods
in $X_{i}=\left(X_{i,1},X_{i,2},X_{i,3}\right)$, and the lagged dependent
variables for period $t$ are collected in $Y_{i,<t}=\left(Y_{i,0},\cdots,Y_{i,t-1}\right)$.
The autoregressive parameter $\rho$ is the regression coefficient
on the lagged dependent variable $1\left\{ Y_{i,t-1}\geq k\right\} $;
$\beta$ is the regression coefficient on the contemporaneous covariates;
and the threshold parameters $\gamma_{j}$ map the underlying latent
variable $Y_{i,t}^{*}$ into the observed ordered outcome $Y_{i,t}$.
Equation (\ref{eq:strict_exogeneity}) restricts the error terms $U_{i,t}$
to be i.i.d. logistic, and is a strict exogeneity assumption on the
regressors and past outcomes.\footnote{The dynamics in our model are restricted to depend on $Y_{i,t}$ through
$1\left\{ Y_{i,t-1}\geq k\right\} $ only. An alternative model for
which we can identify some features is one that is linear in its history,
i.e. $Y_{i,t}^{*}=\alpha_{i}+X_{i,t}\beta+\rho Y_{i,t-1}-U_{i,t}.$
We were unable to use our approach to obtain identification in the
more general model with $Y_{i,t}^{*}=\alpha_{i}+X_{i,t}\beta+\sum_{j=2}^{J}\rho_{j}1\left\{ Y_{i,t-1}=j\right\} -U_{i,t}$.} 

This model combines a number of noteworthy features. First, it is
a model for discrete ordered outcomes, and therefore a \textit{nonlinear}
model. Second, it is \textit{dynamic}, in the sense that the current
outcome depends directly on the outcome in the previous period. This
feature, called \textit{state dependence}, is governed by the autoregressive
parameter $\rho$. Third, it allows for \textit{unobserved heterogeneity}
in an unrestricted way, i.e. it is a \emph{fixed effects }model. Fourth,
the model is only specified for a \textit{small number of time periods},
$T=3$. Period 0 is unmodelled, but an observation on the outcome
variable in time 0 is required for identification.

We believe that we are the first to provide identification and estimation
results for all common parameters in a dynamic ordered logit model
with fixed effects and a fixed number of time periods. Using four
time periods of data on the ordinal outcome variable, we identify
the autoregressive coefficients on the lagged dependent variable,
and the regression coefficients on the exogenous regressors. We also
identify the threshold parameters, which makes it possible to interpret
the magnitude of the estimated coefficients. This distinguishes the
ordered choice model from the dynamic binary choice model with fixed
effects, where such an interpretation is not available. Our identification
result suggest a composite conditional maximum likelihood estimator
for the parameters in our model. We establish conditions under which
that estimator is consistent and asymptotically normal. 

We use our estimator to investigate the determinants of self-reported
health, focusing on the link between income and health in a panel
of European countries over the period 2003-2016. We obtain two main
findings. First, even after controlling for unobserved heterogeneity,
persistence plays a positive and significant role in one's self-reported
health. In other words, one's health is dependent on the health in
the previous period, which is a reasonable thing to expect, as health
problems may expand over a number of periods, or become permanent.
Quantitatively, we estimate a persistence parameter that is analogous
to an autoregressive parameter of about 0.25 in a linear AR(1) model.
Second, we find that, when controlling for unobserved heterogeneity,
the link between income and health becomes statistically insignificant,
suggesting that other factors might explain the association between
the two. This is in line with studies that have found a smaller or
insignificant association when using fixed effects (Gunasekara, 2011;
Larrimore, 2011).

\section{Related literature in econometrics}

We believe that our paper is the first to provide identification and
estimation results for a panel data model with (i) ordered outcomes;
(ii) a lagged dependent variable; (iii) fixed effects; and (iv) a
fixed number of time periods. Our econometric contribution is related
to several strands of literature, each of which features a subset
of these features.

Most closely related to our paper is the literature on binary and
multinomial choice models with fixed effects and lagged dependent
variables, which features all but (i). The seminal work by Honoré
and Kyriazidou (2000) builds on Cox (1958) and Chamberlain (1985)
to estimate the parameters in dynamic binary choice logit model with
fixed effects and time-varying regressors. Hahn (2001) discusses the
information bound for a special case of their model. Honoré and Kyriazidou
(2019) discuss identification of some closely related models. Honoré
and Weidner (2020) construct moment conditions that shed light on
identification in this and related models, and provide a $\sqrt{n}$-consistent
estimator. Honoré and Tamer (2006), Aristodemou (2020) and Khan et
al. (2020) obtain results for models that do not have logistic errors.
For the static multinomial model, Chamberlain (1980) studies the logit
case; Shi et al. (2008) provides results for the general static; and
Magnac (2000) studies the dynamic version. We supplement these results
by showing that, in an \emph{ordered} choice model, the thresholds
in the latent variable model can be identified along with the regression
coefficients and the autoregressive parameter. This allows for a quantitative
interpretation of true state dependence. Such an interpretation is
not available in the binary and multinomial choice models.

The literature on static ordered logit models with fixed effects features
all but (ii). This model was analyzed by Das and van Soest (1999),
Baetschmann et al. (2015), and Muris (2017). Our result differs from
the results in those papers, because we provide results for a \emph{dynamic}
version of the ordered logit model.

The literature on random effects dynamic ordered choice models features
all but (iii). Random effects dynamic ordered choice models have been
studied and applied extensively (Contoyannis et al., 2004; Albarran
et al., 2019). Such approaches require strong restrictions on the
relationship between the unobserved heterogeneity and the exogeneous
variables in the model. Such restrictions are usually unappealing
to economists, as evidenced by the fact that they are rarely used
in linear models. Our approach does not impose random effects restrictions
and is the first to provide a fixed effects approach for dynamic ordered
choice models. 

Note that our approach is fixed-$T$ consistent. The difficulty of
allowing for fixed effects is alleviated when one can assume that
$T\to\infty$, referred to as ``large-$T$''. Large-T fixed effects
dynamic ordered choice models have been studied by Carro and Traferri
(2014) and Fernández-Val et al. (2017), see also Carro (2007) for
the binary outcome case. In the large-$T$ case, one can use techniques
that correct for the bias that comes from including fixed effects
in the nonlinear panel model. This approach does not feature (iv).
These techniques are not appropriate for our empirical application,
which is a rotating panel with $T=4$.

One limitation of our approach is that we restrict the way in which
the lagged dependent variable enters the model. The random effects
and large-$T$ approach can accommodate a richer dynamic specification.
We leave for future work whether such an extension is possible with
a fixed-$T$ fixed-effects approach.

\section{Identification\label{sec:Identification}}

We normalize $\gamma_{k}=0$, where $k$ is as in equation (\ref{eq:latent_variable}).
This scale normalization is without loss of generality because the
scale of $\alpha_{i}$ is unrestricted. Our model implies that the
binary variable $D_{i,t}(k)=1\left\{ Y_{i,t}\geq k\right\} $ follows
the dynamic binary choice logit model in Honoré and Kyriazidou (2000),
HK hereafter. Specifically, equation (3) in HK applies to the transformed
model
\[
D_{i,t}(k)=1\left\{ X_{i,t}\beta+\rho D_{i,t-1}\left(k\right)+\alpha_{i}-U_{i,t}\geq0\right\} ,
\]
i.e. the transformed model follows a dynamic binary choice logit model
with fixed effects. The implied conditional probabilities relevant
for our analysis are 
\begin{equation}
P\left(\left.D_{i,0}\left(k\right)=1\right|X_{i},\alpha_{i}\right)\equiv p_{0}\left(X_{i},\alpha_{i}\right),\label{eq:model_probabilities_0}
\end{equation}
and, for $t=1,2,3$,

\begin{align}
P\left(\left.D_{i,t}\left(k\right)=1\right|X_{i},\alpha_{i},D_{i,<t}\left(k\right)\right) & =\frac{\exp\left(\alpha_{i}+X_{i,t}\beta+\rho D_{i,t-1}\left(k\right)\right)}{1+\exp\left(\alpha_{i}+X_{i,t}\beta+\rho D_{i,t-1}\left(k\right)\right)},\label{eq:model_probabilities_t}
\end{align}
where we have let $D_{i,<t}\left(k\right)=\left(D_{i,0}\left(k\right),\cdots,D_{i,t-1}\left(k\right)\right)$.
HK provide conditions that guarantee identification of $\beta$ and
$\rho$ by constructing a conditional probability that features $\left(\beta,\rho\right)$
but that is free of $\alpha_{i}$.

If $Y_{i,t}$ has at least three points of support, there is information
in $Y_{it}$ beyond $D_{it}\left(k\right)$. In the remainder of this
section, we show that this information can be used to identify the
threshold parameters 
\[
\gamma\equiv\left(\gamma_{2},\gamma_{3},\cdots,\gamma_{k-1},\gamma_{k+1},\cdots,\gamma_{J}\right).
\]
This leads to an interpretation of the magnitude\textbf{ }of $\left(\beta,\rho\right)$
that is not available for the dynamic binary choice model. Muris (2017,
Section III.C) discusses this for the static panel data ordered choice
models ($\rho=0$).

We now construct a conditional probability that features $\left(\beta,\rho,\gamma\right)$
but not the incidental parameters $\alpha_{i}$. To this end, extend
the definition 
\[
D_{i,t}\left(j\right)=1\left\{ Y_{i,t}\geq j\right\} ,\,2\leq j\leq J,
\]
to thresholds $j\neq k$, and abbreviate $D_{i,t}\equiv D_{i,t}\left(k\right)$.
Define the events $\left(A_{j,l},B_{j,l},C_{j,l}\right)$, with $2\leq j\leq k\leq l\leq J$,\footnote{Choosing $j\leq k$ guarantees that when $D_{i,2}\left(j\right)=0,$
the lagged dependent variable in period 3 is 0. The opposite is true
for $l\geq k$ and $D_{i,2}\left(l\right)=1$. There would be no gain
from considering a threshold different from $k$ in the first period.
Using $k$ as the threshold in the second period is the only way to
cancel out the threshold parameters from period 1. Given that we are
using the subpopulation $X_{i,2}=X_{i,3}$, the fact that $j,l$ are
used alternately in periods 2 and 3 does not create additional difficulties.} as follows:
\begin{align*}
A_{j,l} & =\left\{ D_{i,0}=d_{0},D_{i,1}=0,D_{i,2}\left(l\right)=1,D_{i,3}\left(j\right)=d_{3}\right\} ,\\
B_{j,l} & =\left\{ D_{i,0}=d_{0},D_{i,1}=1,D_{i,2}\left(j\right)=0,D_{i,3}\left(l\right)=d_{3}\right\} ,\\
C_{j,l} & =A_{j,l}\cup B_{j,l}.
\end{align*}
For $d_{0}=d_{3}=0$, the event $A_{j,l}$ corresponds to moving up
in the middle periods $t=1,2$, starting below $k$ to moving up to
at least $l\geq k$. The event $B_{j,l}$ corresponds to moving down
in the middle periods, starting from at least $k$ and moving below
$j\leq k$. 

If $j=k=l$, the event $C_{k,k}$ corresponds to switchers (observations
with $D_{i1}+D_{i2}=1$), as in HK. In the ordered model, it is possible
to vary the cutoffs in the periods $t=2,3$ if the dependent variable
has more than two points if support. Varying the cutoffs over time
is what distinguishes our conditioning event from that in HK. It is
what allows us to identify the threshold parameters.

The following sufficiency result shows that different choices of $\left(j,l\right)$
reveal different combinations of thresholds in certain conditional
probabilities that do not depend on the incidental parameters $\alpha_{i}$.
In what follows, the logistic function is denoted by $\Lambda\left(u\right)=\exp\left(u\right)/\left(1+\exp\left(u\right)\right)$,
and the change in the regressors from period 1 to 2 by $\Delta X_{i}=X_{i2}-X_{i1}$.
\begin{thm}[Sufficiency]
\label{thm:sufficiency}For the dynamic ordered logit model with
fixed effects, for any $\left(j,l\right)$ such that $2\leq j\leq k\leq l\leq J$,
and for any $d_{0},d_{3}\in\left\{ 0,1\right\} ,$
\begin{align}
P\left(\left.A_{j,l}\right|X_{i},C_{j,l},X_{i,2}=X_{i,3}\right) & =1-\Lambda\left(\Delta X_{i}\beta+\rho\left(d_{0}-d_{3}\right)+\left(1-d_{3}\right)\gamma_{l}+d_{3}\gamma_{j}\right)\label{eq:sufficiency_A}\\
P\left(\left.B_{j,l}\right|X_{i},C_{j,l},X_{i,2}=X_{i,3}\right) & =\Lambda\left(\Delta X_{i}\beta+\rho\left(d_{0}-d_{3}\right)+\left(1-d_{3}\right)\gamma_{l}+d_{3}\gamma_{j}\right).\label{eq:sufficiency_B}
\end{align}
\end{thm}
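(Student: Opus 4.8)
The plan is to reduce \eqref{eq:sufficiency_A} to a ratio of two path probabilities for the binary threshold-crossing process and show that every nuisance term cancels; \eqref{eq:sufficiency_B} will then follow from $P\left(B_{j,l}\mid C_{j,l}\right)=1-P\left(A_{j,l}\mid C_{j,l}\right)$. Throughout I condition on $\left(X_i,\alpha_i\right)$ and restrict to the subpopulation with $X_{i,2}=X_{i,3}$. First I would record the sequential structure. Extending the binary reduction behind \eqref{eq:model_probabilities_t} to an arbitrary cutoff, each indicator $D_{i,t}\left(m\right)$ satisfies $P\left(D_{i,t}\left(m\right)=1\mid X_i,\alpha_i,D_{i,t-1}\right)=\Lambda\left(\alpha_i+X_{i,t}\beta+\rho D_{i,t-1}-\gamma_m\right)$, with $\gamma_k=0$ by the normalization. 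The strict exogeneity assumption \eqref{eq:strict_exogeneity} makes $U_{i,t}$ independent of the past given $\left(X_i,\alpha_i\right)$, so the probability of any outcome path factors sequentially, each period's factor depending on the past only through the lagged indicator $D_{i,t-1}=D_{i,t-1}\left(k\right)$.

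Next I would exploit the ordering $j\le k\le l$. On $A_{j,l}$ the event $\left\{D_{i,2}\left(l\right)=1\right\}$ forces $D_{i,2}=1$, and on $B_{j,l}$ the event $\left\{D_{i,2}\left(j\right)=0\right\}$ forces $D_{i,2}=0$; hence the lagged state entering period $3$ is deterministic on each event, so the period-$3$ factor is a genuine function of the parameters and data. I would then write $P\left(A_{j,l}\mid X_i,\alpha_i\right)$ and $P\left(B_{j,l}\mid X_i,\alpha_i\right)$ each as a product of four factors: a common initial factor $P\left(D_{i,0}=d_0\mid X_i,\alpha_i\right)$; a period-$1$ factor, the only one whose \emph{form} differs between the events, equal to $1-\Lambda\left(\alpha_i+X_{i,1}\beta+\rho d_0\right)$ on $A_{j,l}$ and $\Lambda\left(\alpha_i+X_{i,1}\beta+\rho d_0\right)$ on $B_{j,l}$; a period-$2$ factor; and a period-$3$ factor.

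Since $A_{j,l}$ and $B_{j,l}$ are disjoint and $C_{j,l}=A_{j,l}\cup B_{j,l}$, I have $P\left(A_{j,l}\mid X_i,C_{j,l}\right)=P\left(A_{j,l}\right)/\left(P\left(A_{j,l}\right)+P\left(B_{j,l}\right)\right)$, which I would evaluate through the odds $P\left(A_{j,l}\right)/P\left(B_{j,l}\right)$. The common initial factor cancels, eliminating the unmodelled $p_0$. The surviving period-$2$ and period-$3$ factors form two cross-pairs sharing a common threshold: one pair organized around threshold $l$ with lagged state $0$, the other around threshold $j$ with lagged state $1$. Imposing $X_{i,2}=X_{i,3}$ equates the two indices within each pair, so that exactly one pair cancels identically while the other collapses to a single log-odds term; which pair survives is governed by $d_3$, producing the threshold combination $\left(1-d_3\right)\gamma_l+d_3\gamma_j$. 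The $\alpha_i$ in the surviving log-odds then cancels against the $\alpha_i$ in the period-$1$ odds, leaving an index linear in $\Delta X_i\beta$, $\rho\left(d_0-d_3\right)$, and that threshold combination. Converting back to a probability via $r/\left(1+r\right)=\Lambda\left(\log r\right)$ and using $1-\Lambda\left(u\right)=\Lambda\left(-u\right)$ delivers \eqref{eq:sufficiency_A}.

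I expect the main obstacle to be the bookkeeping that isolates a single threshold, rather than the cancellation of $\alpha_i$, which is the standard conditional-logit step. The delicate points are verifying that the restriction $X_{i,2}=X_{i,3}$ cancels precisely the \emph{unused} cross-pair in each event, and checking that the two cases $d_3=0$ (which retains $\gamma_l$) and $d_3=1$ (which retains $\gamma_j$) unify into $\left(1-d_3\right)\gamma_l+d_3\gamma_j$ while the initial- and intermediate-state contributions combine into $\rho\left(d_0-d_3\right)$. Confirming that the roles of $j$ and $l$ swap cleanly as $d_3$ flips, and that the ordering $j\le k\le l$ is exactly what makes the period-$3$ lagged state deterministic, is where the care is concentrated.
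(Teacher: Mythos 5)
Your proposal is correct and takes essentially the same route as the paper's proof: factor the path probabilities of $A_{j,l}$ and $B_{j,l}$ conditional on $\left(X_{i},\alpha_{i}\right)$ into initial, period-1, period-2, and period-3 terms, use $j\leq k\leq l$ to make the lagged state entering period 3 deterministic, then form the odds so that $p_{0}$ and, under $X_{i,2}=X_{i,3}$, the period-2/3 denominators cancel, leaving an $\alpha_{i}$-free logistic index. Your ``cross-pair'' bookkeeping is exactly the cancellation the paper carries out when it plugs its two path-probability expressions into the conditional-probability ratio, so the two arguments coincide.
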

Identification of the model parameters comes from considering all
possible combinations of cutoffs. It is clear from Theorem \ref{thm:sufficiency}
that different choices for $\left(j,k,l,d_{0},d_{3}\right)$ reveal
information about distinct linear combinations of $\left(\rho,\gamma\right)$.
By considering multiple choices of $\left(j,k,l,d_{0},d_{3}\right)$,
and then aggregating the resulting information, we can identify all
the model parameters. We require an additional assumption before stating
our main identification result.
\begin{assumption}
\label{assu:XVariation}For all $\left(j,l\right)$ such that $2\leq j\leq k\leq l$,
and for all $d_{0},d_{3}\in\left\{ 0,1\right\} $
\[
Var\left(\left.\Delta X_{i}\right|X_{i,2}=X_{i,3},C_{j,l}\right)
\]
 is invertible.
\end{assumption}
This assumption guarantees that for each choice of $\left(j,l\right)$,
there is sufficient variation in $\Delta X_{i}$ in the subpopulation
of stayers $X_{i,2}=X_{i,3}$ to identify the regression coefficient.
This assumption can be weakened: we only need sufficient variation
for some $\left(j,l\right)$. However, if it fails for sufficiently
many $\left(j,l\right)$, identification of some of the threshold
parameters may fail.

Denote by $Y_{i}=\left(Y_{i,0},Y_{i,1},Y_{i,2},Y_{i,3}\right)$ the
time series of dependent variables for a given individual. 
\begin{thm}[Identification]
\label{thm:Identification}If Assumption 2 holds, then $\left(\beta,\rho,\gamma\right)$
can be identified from the joint distribution of the vector $\left(X_{i},Y_{i}\right)$
generated by the dynamic ordered logit model with fixed effects.
\end{thm}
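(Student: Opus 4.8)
The plan is to use Theorem~\ref{thm:sufficiency} as the sole structural input and to reduce identification to inverting a link function and solving a small linear system. The right-hand sides of \eqref{eq:sufficiency_A}--\eqref{eq:sufficiency_B} are conditional probabilities of events defined entirely through the observable sequence $Y_i=\left(Y_{i,0},Y_{i,1},Y_{i,2},Y_{i,3}\right)$ and the regressors $X_i$; hence, for every admissible tuple $\left(j,l,d_0,d_3\right)$ with $2\leq j\leq k\leq l\leq J$ and $d_0,d_3\in\left\{0,1\right\}$, the left-hand side is identified from the joint distribution of $\left(X_i,Y_i\right)$ on the stayer subpopulation $X_{i,2}=X_{i,3}$. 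Writing $c\left(j,l,d_0,d_3\right)=\rho\left(d_0-d_3\right)+\left(1-d_3\right)\gamma_l+d_3\gamma_j$ for the intercept appearing inside $\Lambda$, my first step is to linearize.

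Since $\Lambda$ is strictly increasing and continuous with a well-defined inverse $\Lambda^{-1}$, applying $\Lambda^{-1}$ to \eqref{eq:sufficiency_B} gives, as an identity in $\Delta X_i$ on the conditioning set,
\begin{equation*}
\Lambda^{-1}\left(P\left(\left.B_{j,l}\right|X_i,C_{j,l},X_{i,2}=X_{i,3}\right)\right)=\Delta X_i\beta+c\left(j,l,d_0,d_3\right).
\end{equation*}
The left-hand side is identified, and the right-hand side is an affine function of $\Delta X_i$. By Assumption~\ref{assu:XVariation}, $Var\left(\left.\Delta X_i\right|X_{i,2}=X_{i,3},C_{j,l}\right)$ is invertible, so no two distinct pairs $\left(\beta,c\right)$ can generate the same affine function on the support of $\Delta X_i$; this identifies both the slope $\beta$ and the scalar intercept $c\left(j,l,d_0,d_3\right)$ for each admissible tuple. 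Because $\beta$ is common across tuples, it is in fact over-identified.

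It remains to recover $\left(\rho,\gamma\right)$ from the identified intercepts, which is a purely linear exercise using the normalization $\gamma_k=0$. Setting $d_0=d_3=0$ yields $c\left(j,l,0,0\right)=\gamma_l$, identifying every upper threshold $\gamma_l$ for $k+1\leq l\leq J$ (taking, say, $j=k$). Setting $d_0=d_3=1$ yields $c\left(j,l,1,1\right)=\gamma_j$, identifying every lower threshold $\gamma_j$ for $2\leq j\leq k-1$ (taking $l=k$). Finally, differencing over $d_0$ at a fixed $\left(j,l,d_3\right)$ gives $c\left(j,l,1,0\right)-c\left(j,l,0,0\right)=\rho$ (equivalently $c\left(j,l,1,1\right)-c\left(j,l,0,1\right)=\rho$), identifying $\rho$; as with $\beta$, it is over-identified.

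The main obstacle is not the final bookkeeping but ensuring the machinery is well posed. Concretely, I would verify that the conditioning events $C_{j,l}$ on the stayer subpopulation occur with positive probability for the tuples invoked above, so that each intercept $c\left(j,l,d_0,d_3\right)$ is actually reachable, and I would check that every component of $\gamma$ is touched by at least one tuple for which Assumption~\ref{assu:XVariation} delivers the separation of slope from intercept. The remark following Assumption~\ref{assu:XVariation} already signals the relevant tradeoff: if the variance condition fails for too many $\left(j,l\right)$, some threshold may drop out of the solvable system. All of the genuinely model-specific work --- canceling the incidental parameter $\alpha_i$ and producing the index $\Delta X_i\beta+c\left(j,l,d_0,d_3\right)$ --- is carried by Theorem~\ref{thm:sufficiency}, so the proof of identification itself is essentially the inversion and linear-algebra argument sketched here.
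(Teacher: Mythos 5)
Your proposal is correct and follows essentially the same route as the paper's proof: invert the logistic link in the sufficiency result, use Assumption \ref{assu:XVariation} to separate the slope $\beta$ from the intercept, and then recover $\rho$, the upper thresholds (via $d_0=d_3=0$, $l>k$), and the lower thresholds (via $d_0=d_3=1$, $j<k$) from the identified intercepts. The only cosmetic difference is that the paper pins down $\beta$ from the single cell $j=k=l$, $d_0=d_3=0$ (where the intercept vanishes by the normalization $\gamma_k=0$) using an explicit least-squares moment formula, whereas you identify the slope and intercept jointly in every cell; the two arguments are equivalent given the invertible conditional variance.
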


\section{Estimation\label{sec:Estimation}}

Theorem \ref{thm:sufficiency} suggests that, for each choice of $2\leq j\leq k\leq l$,
we could use a conditional maximum likelihood estimator (CMLE) to
estimate a linear combination of the model parameters. Theorem \ref{thm:Identification}
suggests that a composite CMLE (CCMLE), based on the combination of
conditional likelihoods across all choices of $\left(j,k,l\right)$,
may be used to estimate the model parameters $\left(\beta,\rho,\gamma\right)$.
In this section, we define that CCMLE and establish conditions under
which it has desirable large sample properties. We focus on the discrete
regressor case. Results for continuous regressors can be obtained
by adapting Theorems 1 and 2 in HK to our case.

The binary random variable 
\[
C_{i,jl}=1\left\{ \left(D_{i,1}=0,D_{i,2}\left(l\right)=1\right)\text{ or }\left(D_{i,1}=1,D_{i,2}\left(j\right)=0\right)\right\} \times1\left\{ X_{i,2}=X_{i,3}\right\} .
\]
indicates whether $i$'s time series fits the description in $C_{j,l}=A_{j,l}\cup B_{j,l}$,
and that it is also a ``stayer'' in the sense that $X_{i2}=X_{i3}$.
Note that if $C_{i,jl}=1$, then $D_{i,1}=1$ implies that the individual
time series is of the type $B_{j,l}$. Similarly, if $C_{i,jl}=1$,
then $D_{i,1}=0$ implies that individual $i$ is of type $A_{j,l}$.

In the log-likelihood contribution below, (\ref{eq:CCML}), we have
substituted $D_{i,0}$ for $d_{0}$ in equation (\ref{eq:sufficiency_B}).
The value to substitute for $d_{3}$ depends on whether we are in
case $A$ or $B$. To that end, define
\begin{align*}
D_{i,3,jl} & =\begin{cases}
D_{i,3}\left(j\right) & \text{ if }D_{i,1}=0,\\
D_{i,3}\left(l\right) & \text{ if }D_{i,1}=1.
\end{cases}
\end{align*}
The conditional log likelihood contribution for individual $i$, for
cutoffs $\left(j,l\right),$ $2\leq j\leq k\leq l\leq J$, can then
be written

\begin{align}
l_{i,jl}\left(\beta,\rho,\gamma_{j},\gamma_{l}\right) & =C_{i,jl}\left[D_{i,1}\ln\left\{ \Lambda\left(\Delta X_{i}\beta+\rho\left(D_{i,0}-D_{i,3,jl}\right)+\gamma_{l}\left(1-D_{i,3,jl}\right)+\gamma_{j}D_{i,3,jl}\right)\right\} +\right.\nonumber \\
 & \phantom{}\left.\phantom{+}+\left(1-D_{i,1}\right)\ln\left\{ 1-\Lambda\left(\Delta X_{i}\beta+\rho\left(D_{i,0}-D_{i,3,jl}\right)+\gamma_{l}\left(1-D_{i,3,jl}\right)+\gamma_{j}D_{i,3,jl}\right)\right\} \right].\label{eq:CCML}
\end{align}
The CCMLE is 
\begin{equation}
\widehat{\theta}_{n}=\left(\widehat{\beta}_{n},\widehat{\rho}_{n},\widehat{\gamma}_{n}\right)=\arg\max\frac{1}{n}\sum_{2\leq j\leq k\leq l}\sum_{i=1}^{n}l_{i,jl}\left(\beta,\rho,\gamma_{j},\gamma_{l}\right),\label{eq:CCMLE}
\end{equation}
where we have implicitly imposed $\gamma_{k}=0$ in the definition
of $l_{i,jl}$.

We maintain the following assumption to establish the asymptotic properties
of the CCMLE.
\begin{assumption}[Stayers]
\label{assu:discrete_stayers} $P\left(X_{i,2}=X_{i,3}\right)>0.$
\end{assumption}
With additional technical work, this assumption can be relaxed to
the case where $X_{i,2}-X_{i,3}$ is continuously distributed with
positive density around zero, see HK's Theorem 1 and 2. 
\begin{thm}
\label{thm:asymptotics-CCMLE}Let \textup{$\left\{ \left(Y_{i},X_{i}\right),\,i=1,\cdots n\right\} $}
be a random sample of size $n$ from the dynamic ordered logit model
with fixed effects with true parameter values $\theta_{0}=\left(\beta_{0},\rho_{0},\gamma_{0}\right)$.
Under Assumptions \ref{assu:XVariation} and \ref{assu:discrete_stayers},
and for any value of $\theta_{0}$, 
\[
\widehat{\theta}_{n}\stackrel{p}{\to}\theta_{0}\text{ as }n\to\infty.
\]
Furthermore,
\[
\sqrt{n}\left(\widehat{\theta}_{n}-\theta_{0}\right)\stackrel{d}{\to}\mathcal{N}\left(0,H^{-1}\Sigma H^{-1}\right)\text{ as }n\to\infty,
\]
where $\Omega$ as the variance of the score of the composite likelihood,
defined in (\ref{eq:score_CCML}), and $H$ is the associated Hessian
defined in (\ref{eq:Hessian_CCML}).
\end{thm}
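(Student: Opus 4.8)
The plan is to treat $\widehat{\theta}_n$ as an M-estimator maximizing the sample composite objective $\widehat{Q}_n(\theta)=n^{-1}\sum_i\sum_{2\le j\le k\le l}l_{i,jl}(\theta)$ and to verify the standard regularity conditions for consistency and asymptotic normality, adapting them to the composite (rather than genuine) likelihood structure. Two structural features make the verification tractable. First, each contribution $l_{i,jl}$ is the log-likelihood of a binary logit in the linear index $\eta_{i,jl}(\theta)=\Delta X_i\beta+\rho(D_{i,0}-D_{i,3,jl})+\gamma_l(1-D_{i,3,jl})+\gamma_j D_{i,3,jl}$, which is affine in $\theta$; hence each $l_{i,jl}$, and therefore $\widehat{Q}_n$, is concave in $\theta$. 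Second, in the discrete regressor case with regressors of finite support the summands and all their derivatives are uniformly bounded on compact parameter sets, so the dominance conditions needed for the laws of large numbers hold automatically.

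For consistency I would apply the information inequality termwise. Because each $l_{i,jl}$ is a correctly specified conditional log-likelihood -- Theorem \ref{thm:sufficiency} shows the conditional probabilities of $A_{j,l}$ and $B_{j,l}$ are exactly logistic in $\eta_{i,jl}(\theta_0)$ -- the population component $E[l_{i,jl}(\theta)]$ is maximized at the true values of the sub-parameters it involves, with equality only if $\eta_{i,jl}(\theta)=\eta_{i,jl}(\theta_0)$ almost surely on $\{C_{i,jl}=1\}$. Assumption \ref{assu:discrete_stayers} guarantees that the relevant conditioning events have positive probability, so these components are non-degenerate. Summing over $(j,l)$, the population objective $Q(\theta)=\sum_{j,l}E[l_{i,jl}(\theta)]$ attains its maximum at $\theta_0$, and uniqueness of that maximizer is precisely the content of Theorem \ref{thm:Identification}: Assumption \ref{assu:XVariation} forces $\beta=\beta_0$ from the $\Delta X_i$ variation within each term, and aggregating the distinct linear combinations of $(\rho,\gamma)$ revealed by the different $(j,l)$ pins down $\rho_0$ and $\gamma_0$. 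Combining the pointwise law of large numbers for $\widehat{Q}_n$, concavity, and the unique maximizer then yields $\widehat{\theta}_n\stackrel{p}{\to}\theta_0$ via the standard consistency theorem for concave extremum estimators, which avoids an explicit compactness assumption.

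For asymptotic normality I would use the first-order conditions $n^{-1}\sum_i s_i(\widehat{\theta}_n)=0$, where $s_i=\sum_{j,l}\partial l_{i,jl}/\partial\theta$ is the composite score, and a mean-value expansion about $\theta_0$,
\[
0=\frac{1}{\sqrt{n}}\sum_i s_i(\theta_0)+\Bigl(\frac{1}{n}\sum_i\nabla_\theta s_i(\bar{\theta}_n)\Bigr)\sqrt{n}\bigl(\widehat{\theta}_n-\theta_0\bigr),
\]
with $\bar{\theta}_n$ on the segment between $\widehat{\theta}_n$ and $\theta_0$. Each termwise conditional score has conditional mean zero at $\theta_0$, so $E[s_i(\theta_0)]=0$ and the Lindeberg--L\'evy central limit theorem gives $n^{-1/2}\sum_i s_i(\theta_0)\stackrel{d}{\to}\mathcal{N}(0,\Sigma)$ with $\Sigma=\mathrm{Var}(s_i(\theta_0))$. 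Consistency of $\bar{\theta}_n$, continuity of the Hessian, and the uniform law of large numbers give $-n^{-1}\sum_i\nabla_\theta s_i(\bar{\theta}_n)\stackrel{p}{\to}H$, and solving the expansion yields $\sqrt{n}(\widehat{\theta}_n-\theta_0)\stackrel{d}{\to}\mathcal{N}(0,H^{-1}\Sigma H^{-1})$.

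The main obstacle is invertibility of the composite Hessian $H$. Each term $l_{i,jl}$ contributes only a rank-deficient positive semidefinite block of information, because a single $(j,l)$ identifies just one linear combination of $(\rho,\gamma)$ and, through $\Delta X_i$, the coefficient $\beta$. Establishing that the \emph{sum} over all admissible $(j,l)$ is positive definite is the crux, and it is exactly the sample-information counterpart of the identification argument in Theorem \ref{thm:Identification}: Assumption \ref{assu:XVariation} delivers positive definiteness in the $\beta$-directions within each term, while the thresholds $\gamma_j,\gamma_l$ appearing across the different pairs span the remaining directions. A secondary point worth emphasizing is that, because this is a \emph{composite} likelihood, the information-matrix equality fails: the cross-term covariances in $\Sigma=\mathrm{Var}\bigl(\sum_{j,l}\partial l_{i,jl}/\partial\theta\bigr)$ are not reproduced by $H$, so the asymptotic variance takes the sandwich form $H^{-1}\Sigma H^{-1}$ rather than the usual $H^{-1}$.
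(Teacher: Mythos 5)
Your proposal is correct and follows essentially the same route as the paper: consistency via concavity of the composite objective plus the information inequality and the identification result (the paper invokes Newey--McFadden's Theorem 2.7 for concave objectives), and asymptotic normality via the score CLT and Hessian convergence leading to the sandwich variance $H^{-1}\Sigma H^{-1}$ (the paper verifies the conditions of Newey--McFadden's Theorem 3.1, whose proof is exactly your mean-value expansion). Your explicit discussion of why the composite Hessian is invertible -- each $(j,l)$ block being rank-deficient but the sum being positive definite by the same aggregation logic as Theorem \ref{thm:Identification} -- is in fact more detailed than the paper's one-line appeal to Assumption \ref{assu:XVariation}.
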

\begin{rem}
The convexity of the summands in (\ref{eq:CCMLE}) means that the
objective function is convex. We compute the CCMLE using the Newton-Raphson
algorithm in R's nlm function (R Core Team, 2020). Supplying analytical
gradients and Hessians speeds up the estimation. 
\end{rem}

\section{Persistence in self-reported health status\label{sec:health}}

Our analysis uses panel data for the period 2003-2016 from the European
Union Statistics on Income and Living Conditions (EU-SILC), see Eurostat
(2017) for detailed documentation. The microdata is publicly available
upon request.\footnote{The data were made available to us by Eurostat under Contract RPP
132-2018-EU-SILC.} EU-SILC provides a set of indicators on income and poverty, social
inclusion, living conditions and, importantly, health status. For
each country in the European Union, plus Iceland, Norway, and Switzerland,
EU-SILC contains data on a representative sample of the population
of those 18 years and older. 

EU-SILC is a rotating panel. Every individual is followed over a period
of two to four years. The total number of individual-years for the
period 2003-2016 is 1273877. Our identification result demands four
observations per individual, so we restrict attention to individuals
that report valid information on their health status for 4 consecutive
years. This restriction, and the restriction that the explanatory
variables that we use in the analysis below have non-missing information,
leaves us with a sample of 260601 individuals, for 1042404 individual-years.
The proportion of incomplete samples differs across countries. As
a result, the sample we work with may not be representative of EU-SILC's
population. For example, out of the 27 countries that contribute to
our sample, the largest contributors are Italy (with 43385 individuals),
Spain (25634), and Poland (22628); the smallest are Portugal (12),
Iceland (1496), and Slovakia (1982).

The outcome variable in our analysis is self-reported health status:
self-perceived physical health, elicited during EU-SILC interviews.
The person answers the question on how she perceives her physical
health to be in general, at the date of the survey, by classifying
it as one of: (1) bad and very bad (12\% in our sample); (2) fair
(26\%); (3) good (44\%); (4) very good (19\%).\footnote{We have merged the separate categories ``bad'' and ``very bad''
in the original reported variable, because there is only a small fraction
of observations with ``very bad'' health status.} Out of $260601\times3=781803$ health transitions that we observe,
most often there is no change in health status (65.6\%). Decreases
by one unit (16\%) are slightly more frequent than increases by one
unit (15\%). Two-unit increases (1.2\%) and decreases (1.5\%) are
infrequent, and three-unit increases (0.08\%) and decreases (0.12\%)
are rare. 

Table \ref{tab:health-income} relates the outcome variable, and changes
to the outcome variable, to our main explanatory variable of interest,
log income (total disposable household equivalised income). Household
income was scaled using the composition and size of each household.
This scale is based on the OECD modified equivalence scale, which
gives a weight of 1.0 to the first adult in the household, 0.5 to
other adults and 0.3 to each child (under 14 years old). 

The table provides descriptive statistics for log income in our sample,
grouped by health status. The top panel is in levels. Average income
is increasing in health status. The bottom panel is in changes, which
represents one way to control for unobserved heterogeneity. The implied
increases for changes are close to zero, hinting at the imported role
of unobserved heterogeneity.
\begin{table}
\centering{}%
\begin{tabular}{c>{\centering}p{2cm}rr}
\hline 
 &  & \multicolumn{2}{c}{Log income}\tabularnewline
\hline 
 &  & mean & sd\tabularnewline
\hline 
health status & 1 & 8.68 & 0.92\tabularnewline
 & 2 & 8.95 & 0.94\tabularnewline
 & 3 & 9.29 & 0.94\tabularnewline
 & 4 & 9.51 & 0.90\tabularnewline
\hline 
 &  & \multicolumn{2}{c}{$\Delta$Log income}\tabularnewline
 &  & mean & sd\tabularnewline
 & -3 & 0.05 & 0.47\tabularnewline
$\Delta$health status & -2 & 0.06 & 0.43\tabularnewline
 & -1 & 0.07 & 0.40\tabularnewline
 & 0 & 0.08 & 0.38\tabularnewline
 & 1 & 0.08 & 0.40\tabularnewline
 & 2 & 0.08 & 0.43\tabularnewline
 & 3 & 0.07 & 0.49\tabularnewline
\hline 
\end{tabular}\caption{Health and income}
\label{tab:health-income}
\end{table}

Table \ref{tab:summary_stats} reports a set of descriptive statistics
on income and other explanatory variables, described in the next few
paragraphs. In our analysis below, we control for some time-varying
variables that are standard in the literature. First, the number of
children is measured as the number of persons living in the private
household that are age $14$ or less, top-coded at 3 children. In
our sample, 74\% of respondents have no children, and the average
number of children is 0.40. Second, marriage status is a dummy variable
that indicates being married or living together. The majority of the
individuals are married (61\%). Third, we use a self-reported indicator
for labor market status variable that we map onto 4 values: (1) employed,
51\%; (2) unemployed, 5.2\%; (3) retired, 12\% and (4) other, 32\%.
The value ``other'' includes students, permanently disabled or unfit
to work, and fulfilling domestic tasks and care responsibilities.
\begin{table}
\centering{}%
\begin{tabular}{llrr}
\hline 
 &  & mean & sd\tabularnewline
\hline 
health status & (1) bad and very bad & 0.115 & \tabularnewline
 & (2) fair & 0.260 & \tabularnewline
 & (3) good & 0.437 & \tabularnewline
 & (4) very good & 0.187 & \tabularnewline
\hline 
\multicolumn{4}{l}{\emph{Time-varying explanatory variables}}\tabularnewline
log income &  & 9.172 & 0.965\tabularnewline
child &  & 0.401 & 0.755\tabularnewline
married &  & 0.614 & \tabularnewline
employment status & employed & 0.512 & \tabularnewline
 & unemployed & 0.052 & \tabularnewline
 & retired & 0.117 & \tabularnewline
 & other & 0.319 & \tabularnewline
\hline 
\multicolumn{4}{l}{\emph{Time-invariant explanatory variables}}\tabularnewline
age group & $[18;25]$ & 0.082 & \tabularnewline
 & $]25;35]$ & 0.145 & \tabularnewline
 & $]35;45]$ & 0.188 & \tabularnewline
 & $]45;55]$ & 0.194 & \tabularnewline
 & $]55;65]$ & 0.182 & \tabularnewline
 & $]65;\infty]$ & 0.208 & \tabularnewline
urbanisation & high & 0.388 & \tabularnewline
 & middle & 0.224 & \tabularnewline
 & low & 0.388 & \tabularnewline
male &  & 0.460 & \tabularnewline
educ & no schooling & 0.013 & \tabularnewline
 & primary & 0.143 & \tabularnewline
 & lower secondary & 0.188 & \tabularnewline
 & upper secondary & 0.420 & \tabularnewline
 & post-secondary & 0.038 & \tabularnewline
 & tertiary & 0.199 & \tabularnewline
\hline 
$n$ &  & 260601 & \tabularnewline
$T$ &  & 4 & \tabularnewline
$nT$ &  & 1042404 & \tabularnewline
\hline 
\end{tabular}\caption{Descriptive statistics}
\label{tab:summary_stats}
\end{table}

In our fixed effects results below, we do not further control for
variables that do not change over the sample period for a given individual.
However, we include a set of time-invariant explanatory variables
when we obtain results for non-fixed effects estimators.\footnote{Coefficient estimates for these variables are omitted from the main
text, and reported in Appendix \ref{sec:Additional-empirical-results}.} Table \ref{tab:summary_stats} provides descriptive statistics for
such variables. The total sample contains slightly more females (54\%)
than males (46\%). The proportion of individuals aged between 18 and
25 is 8.2\%; 21\% of individuals are aged 65 or more. With regards
to education, 1.3\% of the sample have no schooling (0); 14\% have
attended primary school (1); 19\% have lower secondary education (3);
42\% have upper secondary education (4); 3.8\% have post-secondary
education (5) and 20\% have tertiary education (6). Geographically,
39\% of individuals live in areas with a high degree of urbanisation
and 39\% live in areas with low levels of urbanisation. 

We estimate the parameters in the dynamic ordered choice model with
fixed effects, with latent variable outcome equation
\begin{align}
SRH_{i,t}^{*} & =\alpha_{i}+\rho1\left\{ SRH_{i,t-1}\geq3\right\} +\beta_{1}\log income_{it}+\beta_{2}child_{it}+\beta_{3}married_{it}+\label{eq:dofe_health}\\
 & \phantom{=}+\beta_{4}unemp_{it}+\beta_{5}retired_{it}+\beta_{6}other_{it}-U_{it}.\nonumber 
\end{align}
Regression results are presented in Table \ref{tab:DOFE_results}.
The first four columns (a-d, ``DOLFE'', for \emph{d}ynamic \emph{o}rdered
\emph{l}ogit with \emph{f}ixed \emph{e}ffects) presents the results
for (\ref{eq:dofe_health}) using the estimator described in Section
\ref{sec:Estimation}. Different values of $h$ refer to a bandwidth
parameter that we introduce because one of the explanatory variables
is continuous, as in HK. Column (d) omits the employment variables,
to check whether relationship between employment status and income
matters for estimation of the effect of income on health. 

We also present estimation results for different estimators. Results
for the static ordered logit model with fixed effects, i.e. setting
$\rho=0$ in (\ref{eq:dofe_health}), are obtained using the estimator
in Muris (2017), and presented in column (e) (``FEOL''). Column
(f) (``DOL'') estimates a dynamic ordered logit model without fixed
effects, i.e. (\ref{eq:dofe_health}) with $\alpha_{i}=0$. Column
(g) (``OL'') presents results for cross-sectional ordered logit
estimator that does not take into account fixed effects or dynamics
(i.e. $\alpha_{i}=\rho=0$ in (\ref{eq:dofe_health})). We also present
results for a static linear model with (h, ``FELM'') and without
(i, ``LM'') fixed effects. The standard errors for all estimators
are obtained using the bootstrap (500 replications). For the estimators
that are not of the fixed effects type, we additionally control for
education, gender, education level and the level of urbanisation.
DOLFE uses four periods of data, corresponding to $t=0,1,2$. For
comparability, the other dynamic estimator also uses periods 0,1,2;
static estimators use periods 1,2.
\begin{table}
\begin{centering}
{\footnotesize{}\hspace*{-1cm}}%
\begin{tabular}[b]{lccccccccc}
\hline 
 & {\footnotesize{}(a)} & {\footnotesize{}(b)} & {\footnotesize{}(c)} & {\footnotesize{}(d)} & {\footnotesize{}(e)} & {\footnotesize{}(f)} & {\footnotesize{}(g)} & {\footnotesize{}(h)} & {\footnotesize{}(i)}\tabularnewline
 & {\footnotesize{}DOLFE} & {\footnotesize{}DOLFE} & {\footnotesize{}DOLFE} & {\footnotesize{}DOLFE} & {\footnotesize{}FEOL} & {\footnotesize{}DOL} & {\footnotesize{}OL} & {\footnotesize{}FELM} & {\footnotesize{}LM}\tabularnewline
 & {\footnotesize{}$h=1$} & {\footnotesize{}$h=0.1$} & {\footnotesize{}$h=10$} & {\footnotesize{}$h=1$} &  &  &  &  & \tabularnewline
\hline 
{\footnotesize{}log(income)} & {\footnotesize{}0.049} & {\footnotesize{}-0.047} & {\footnotesize{}0.059} & {\footnotesize{}0.061} & {\footnotesize{}0.020} & {\footnotesize{}0.340} & {\footnotesize{}0.492} & {\footnotesize{}0.003} & {\footnotesize{}0.194}\tabularnewline
\multirow{1}{*}[103cm]{} & \multirow{1}{*}{{\tiny{}(0.033)}} & {\tiny{}(0.056)} & {\tiny{}(0.029)} & {\tiny{}(0.029)} & {\tiny{}(0.019)} & {\tiny{}(0.004)} & {\tiny{}(0.004)} & {\tiny{}(0.003)} & {\tiny{}(0.002)}\tabularnewline
{\footnotesize{}child} & {\footnotesize{}-0.030} & {\footnotesize{}0.006} & {\footnotesize{}-0.031} & {\footnotesize{}-0.026} & {\footnotesize{}0.021} & {\footnotesize{}0.060} & {\footnotesize{}0.089} & {\footnotesize{}0.002} & {\footnotesize{}0.033}\tabularnewline
 & {\tiny{}(0.051)} & {\tiny{}(0.069)} & {\tiny{}(0.050)} & {\tiny{}(0.049)} & {\tiny{}(0.032)} & {\tiny{}(0.005)} & {\tiny{}(0.005)} & {\tiny{}(0.005)} & {\tiny{}(0.002)}\tabularnewline
{\footnotesize{}married} & {\footnotesize{}0.139} & {\footnotesize{}-0.041} & {\footnotesize{}0.157} & {\footnotesize{}0.130} & {\footnotesize{}0.164} & {\footnotesize{}0.073} & {\footnotesize{}0.141} & {\footnotesize{}0.029} & {\footnotesize{}0.062}\tabularnewline
 & {\tiny{}(0.087)} & {\tiny{}(0.119)} & {\tiny{}(0.086)} & {\tiny{}(0.088)} & {\tiny{}(0.053)} & {\tiny{}(0.007)} & {\tiny{}(0.008)} & {\tiny{}(0.009)} & {\tiny{}(0.003)}\tabularnewline
{\footnotesize{}unemp} & {\footnotesize{}-0.188} & {\footnotesize{}-0.230} & {\footnotesize{}-0.178} &  & {\footnotesize{}-0.196} & {\footnotesize{}-0.242} & {\footnotesize{}-0.308} & {\footnotesize{}-0.033} & {\footnotesize{}-0.127}\tabularnewline
 & {\tiny{}(0.070)} & {\tiny{}(0.110)} & {\tiny{}(0.068)} &  & {\tiny{}(0.038)} & {\tiny{}(0.014)} & {\tiny{}(0.015)} & {\tiny{}(0.007)} & {\tiny{}(0.006)}\tabularnewline
{\footnotesize{}retired} & {\footnotesize{}-0.132} & {\footnotesize{}-0.043} & {\footnotesize{}-0.139} &  & {\footnotesize{}-0.154} & {\footnotesize{}-0.050} & {\footnotesize{}-0.097} & {\footnotesize{}-0.027} & {\footnotesize{}-0.047}\tabularnewline
 & {\tiny{}(0.082)} & {\tiny{}(0.119)} & {\tiny{}(0.080)} &  & {\tiny{}(0.041)} & {\tiny{}(0.010)} & {\tiny{}(0.011)} & {\tiny{}(0.007)} & {\tiny{}(0.004)}\tabularnewline
{\footnotesize{}other} & {\footnotesize{}-0.370} & {\footnotesize{}-0.207} & {\footnotesize{}-0.369} &  & {\footnotesize{}-0.473} & {\footnotesize{}-0.771} & {\footnotesize{}-1.087} & {\footnotesize{}-0.082} & {\footnotesize{}-0.460}\tabularnewline
 & {\tiny{}(0.061)} & {\tiny{}(0.087)} & {\tiny{}(0.061)} &  & {\tiny{}(0.040)} & {\tiny{}(0.010)} & {\tiny{}(0.012)} & {\tiny{}(0.007)} & {\tiny{}(0.005)}\tabularnewline
\hline 
{\footnotesize{}$\rho$} & {\footnotesize{}0.733} & {\footnotesize{}0.723} & {\footnotesize{}0.733} & {\footnotesize{}0.734} &  & {\footnotesize{}1.987} &  &  & \tabularnewline
 & {\tiny{}(0.020)} & {\tiny{}(0.025)} & {\tiny{}(0.020)} & {\tiny{}(0.017)} &  & {\tiny{}(0.023)} &  &  & \tabularnewline
{\footnotesize{}$\gamma_{2}$} & {\footnotesize{}-3.275} & {\footnotesize{}-3.260} & {\footnotesize{}-3.272} & {\footnotesize{}-3.211} & {\footnotesize{}-3.487} & {\footnotesize{}-2.506} & {\footnotesize{}-1.992} &  & \tabularnewline
 & {\tiny{}(0.054)} & {\tiny{}(0.068)} & {\tiny{}(0.053)} & {\tiny{}(0.048)} & {\tiny{}(0.015)} & {\tiny{}(0.007)} & {\tiny{}(0.006)} &  & \tabularnewline
{\footnotesize{}$\gamma_{4}$} & {\footnotesize{}3.326} & {\footnotesize{}3.356} & {\footnotesize{}3.329} & {\footnotesize{}3.321} & {\footnotesize{}3.997} & {\footnotesize{}3.089} & {\footnotesize{}2.603} &  & \tabularnewline
 & {\tiny{}(0.055)} & {\tiny{}(0.076)} & {\tiny{}(0.055)} & {\tiny{}(0.054)} & {\tiny{}(0.024)} & {\tiny{}(0.006)} & {\tiny{}(0.006)} &  & \tabularnewline
\hline 
\end{tabular}{\footnotesize\par}
\par\end{centering}
\caption{Main results.}

\centering{}\label{tab:DOFE_results}
\end{table}

\textbf{Income. }Our main explanatory variable of interest is income
(log income, coefficient $\beta_{1}$). Across almost all specifications,
we find a positive association between income and self-reported health.
The only exception is column (b), where the point estimate is negative,
and about the same magnitude as the standard error. 

Controlling for unobserved heterogeneity leads to a very strong reduction
in the magnitude of the association. For example, for the static case,
a comparison of columns (e) and (g) says that, for the static case,
controlling for unobserved heterogeneity reduces the coefficient on
income by more than a factor 20. For this comparison, note that the
threshold differences increase, suggesting that the scale increases;
compare also the coefficients on the other variables, with an unchanged
order of magnitude. We are not the first to observe a limited association
between income and self-reported health. In a review of the literature,
Gunasekara et al. (2011) found a small positive link between income
and self-reported health, which is reduced when controlling for unmeasured
confounders. Interestingly, Johnston et al. (2009) found no link between
self-reported hypertension and income; an association that, however,
became positive when using objective measures of hypertension. 

The estimated effect of income also changes when we control for state
dependence. Comparing columns (f) and (g), we see that controlling
for state dependence in a model without unobserved heterogeneity reduces
the association between income and self-reported health. So, individually
controlling for unobserved heterogeneity or for dynamics reduces the
magnitude of the association between health and income. 

Finally, a comparison between columns (a) and (d) shows that the estimate
for income association is robust to controlling for employment status,

\textbf{State dependence. }We estimate an autoregressive parameter
of around 0.75, with threshold differences of about $3$. The estimated
ratio of $\rho$ to the thresholds (which measure the distance from
category 3) are much lower than for column (f). This confirms the
importance of controlling for unobserved heterogeneity, which reduces
the estimated magnitude of persistence by a factor 3. Nevertheless,
even when controlling for unobserved (and observed) heterogeneity,
we find strong evidence for large, positive persistence in self-reported
health.

There are at least two ways to get a sense of the magnitude of persistence.
The first approach, also available for binary choice methods, is to
compare estimates of $\rho$ to estimates of regression coefficients.
For example, in our preferred specification in column (a), a health
shock that lifts you from any category below 3, to category 3 or 4,
has an impact on future health that is almost 4 times that of becoming
unemployed. The impact is more than 5 times that of marrying.

The second approach to interpreting estimates of $\rho$ uses the
estimated thresholds to obtain an estimate similar to a linear autoregressive
model.\footnote{This approach is not available for binary choice models because threshold
parameters are not available. } Differences between the thresholds are a measure of the distances
between two categories. If $\gamma_{2}=-\gamma_{4}$, then categories
2 and 3 are as far apart as categories 3 and 4. In such a case, a
linear model may yield similar results in terms of partial effects.
In this case, $-\rho/\gamma_{2}$ and $\rho/\gamma_{3}$ can be interpreted
as linear regression coefficients for that category; we find that
they are about 0.25. Said differently, we find the analog of an AR(1)
coefficient of 0.25 in a linear model.

\textbf{Other time-varying covariates.} The literature so far has
been inconclusive on how retirement is associated with health. On
one hand, retiring allows more time for health-promoting activities,
and reduces work-related stress. On the other hand, people may lose
traction and motivation and may become less active. Therefore, while
Coe and Zamarro (2011) find that retirement improves health, Behncke
(2012) finds an increase in the likelihood of disease following retirement.
In our DOLFE model, the coefficient is statistically insignificant.
This suggests that the association between retirement and health may
not be as strong as previously thought. Compared to the FEOL model,
the effect of retirement on health disappears when controlling for
state dependence. 

The extensive literature on the link between unemployment and health
in particular, and economic conditions and health more generally,
is broadly inconclusive. Some studies have suggested a protective
role of unemployment on health (Ruhm, 2000), while others suggest
that unemployment is detrimental for health (McInerney and Mellor,
2012). Our results appear to be more in line with the findings of
Ruhm (2015) and Böckerman and Ilmakunnas (2009). In DOLFE, the coefficient
of being unemployed is negative and statistically significant. Controlling
for state dependence does not change things compared to the FEOL model. 

Having children is insignificant in our DOLFE model, while previous
studies have provided mixed findings on this question (Mckenzie and
Carter, 2013; Evenson and Simon, 2005). This is also insignificant
in the FEOL model, suggesting that previous findings on having children
might have been driven by unobserved heterogeneity. 

Being married is generally considered a protective factor for health
(Kaplan and Kronick, 2006; Molloy et al., 2009). In our model, however,
it is statistically insignificant -- as opposed to the FEOL model
where it was positive and significant. Thus, controlling for state
dependence appears to be important for this variable. 

\section{Conclusion}

This paper studies a fixed$-T$ dynamic ordered logit model with fixed
effects (DOLFE) and is the first to provide identification and estimation
results for all common parameters in a dynamic ordered logit model
with fixed effects and a fixed number of time periods. The results
require only four time periods of data on the ordinal outcome variable.
We demonstrate identification of the autoregressive coefficients on
the lagged dependent variable, the regression coefficients on the
exogenous regressors, and differences of the threshold parameters.
The latter makes it possible to interpret the magnitude of the coefficients.

Including fixed effects and state dependence in the model is particularly
relevant for self-reported health, a measure that is widely used in
the literature. Future research using self-reporting health can benefit
from our model for two main reasons. First, controlling for fixed
effects, one can take into account unobserved heterogeneity (Carro
and Traferri, 2014; Halliday, 2008; Fernández-Val et al., 2017), which
is especially important due to differences in understanding and reporting
health status (Groot, 2000; Sen, 2002; Jylhä et al., 1998; Baron-Epel
et al., 2005; Jürges, 2007). Second, it incorporates elements of persistence
(Contoyannis et al., 2004; Ohrnberger et al., 2017; Hernández-Quevedo
et al., 2008; Roy and Schurer, 2013) or adaptation (Cubí-Mollá et
al., 2017; Daltroy et al., 1999; Damschroder et al., 2005; Heiss et
al., 2014) by controlling for state dependence (Carro and Traferri,
2014; Fernández-Val et al., 2017; Halliday, 2008). Thus, using our
estimator addresses such biases often present in studies using self-rated
health (Davillas et al., 2017). 

We thus applied the new dynamic ordered logit model with fixed effects
to investigate the determinants of self-reported health, focusing
on the link between income and health in a panel of European countries.
We found that when controlling for unobserved heterogeneity, the association
between income and health becomes statistically insignificant. This
is in line with studies that have found a smaller or insignificant
association when using fixed effects (Gunasekara, 2011; Larrimore,
2011) -- while other studies have suggested a positive association
between the two (Carrieri and Jones, 2017; Ettner, 1996; Frijters
et al., 2005; Mackenbach et al., 2005). Being retired or married also
becomes statistically insignificant in our model when controlling
for state dependence. Being unemployed or having children does not
appear to be associated with self-reported health in our model. 

Our empirical results suggest that persistence plays a positive and
significant role in one's self-reported health. In other words, one's
health is dependent on the health in the previous period, which is
a reasonable thing to expect, as health problems may expand over a
number of periods, or become permanent. This element reflects persistence
of health status over time (Contoyannis et al., 2004). Furthermore,
what is particularly interesting is that in our data, self-reported
health tends to improve, on average, over time - even as people become
four years older during the study period - and being older is typically
associated with worse health outcomes. Therefore, it is reasonable
to believe that this improvement in self-reported health is often
subjective, and does not necessarily reflect one's objective health
level. This element might reflect adaptation to health problems: Even
though one's health does not improve, they adapt to their situation
and therefore report better health (Cubí-Mollá et al 2017; Daltroy
et al., 1999; Damschroder et al., 2005). This second element is a
typical bias when using self-reported health outcomes, and our model
helps correct such biases by introducing the dynamic element to a
fixed effects ordered model. Another interesting finding is that,
when controlling for unobserved heterogeneity, the link between income
and health becomes statistically insignificant, suggesting that other
factors might explain the association between the two. 

Overall, measurement bias in studies using self-reported outcomes
often poses challenges to research, that may discourage the use of
such variables. Our model addresses these biases, and thus provides
a basis for more choices in conducting research with databases that
provide such variables.

\appendix

\section{Proofs\label{sec:Proofs}}
\begin{proof}[Proof of Theorem \ref{thm:sufficiency}.]
 Recall the definition of the events in the main text, for $2\leq j\leq k\leq l\leq J,$
\begin{align*}
A_{j,l} & =\left\{ D_{i,0}\left(k\right)=d_{0},D_{i,1}\left(k\right)=0,D_{i,2}\left(l\right)=1,D_{i,3}\left(j\right)=d_{3}\right\} \\
B_{j,l} & =\left\{ D_{i,0}\left(k\right)=d_{0},D_{i,1}\left(k\right)=1,D_{i,2}\left(j\right)=0,D_{i,3}\left(l\right)=d_{3}\right\} ,\\
C_{j,l} & =A_{j,l}\cup B_{j,l}.
\end{align*}
The following derivations modifies the development in HK (p. 843-844).
The probability of the event $B_{j,l}$, conditional on the covariates
$X_{i}$ and the unobserved heterogeneity $\alpha_{i}$ is given by
\begin{align}
P\left(\left.B_{j,l}\right|X_{i},\alpha_{i}\right) & =p_{0}\left(X_{i},\alpha_{i}\right)^{d_{0}}\left[1-p_{0}\left(X_{i},\alpha_{i}\right)\right]^{1-d_{0}}\label{eq:pAjl}\\
 & \phantom{=}\times\frac{\exp\left(\alpha_{i}+X_{i,1}\beta+\rho d_{0}\right)}{1+\exp\left(\alpha_{i}+X_{i,1}\beta+\rho d_{0}\right)}\nonumber \\
 & \phantom{=}\times\frac{1}{1+\exp\left(\alpha_{i}+X_{i,2}\beta+\rho-\gamma_{j}\right)}\nonumber \\
 & \phantom{=}\times\frac{\left[\exp\left(\alpha_{i}+X_{i,3}\beta-\gamma_{l}\right)\right]^{d_{3}}}{1+\exp\left(\alpha_{i}+X_{i,3}\beta-\gamma_{l}\right)}.\nonumber 
\end{align}
Similarly, for $A_{j,l}$, 
\begin{align}
P\left(\left.A_{j,l}\right|X_{i},\alpha_{i}\right) & =p_{0}\left(X_{i},\alpha_{i}\right)^{d_{0}}\left[1-p_{0}\left(X_{i},\alpha_{i}\right)\right]^{1-d_{0}}\label{eq:pBjl}\\
 & \phantom{=}\times\frac{1}{1+\exp\left(\alpha_{i}+X_{i,1}\beta+\rho d_{0}\right)}\nonumber \\
 & \phantom{=}\times\frac{\exp\left(\alpha_{i}+X_{i,2}\beta-\gamma_{l}\right)}{1+\exp\left(\alpha_{i}+X_{i,2}\beta-\gamma_{l}\right)}\nonumber \\
 & \phantom{=}\times\frac{\left[\exp\left(\alpha_{i}+X_{i,3}\beta+\rho-\gamma_{j}\right)\right]^{d_{3}}}{1+\exp\left(\alpha_{i}+X_{i,3}\beta+\rho-\gamma_{j}\right)}.\nonumber 
\end{align}
The probability of event $A_{j,l}$, conditional on the event $C_{j,l}=A_{j,l}\cup B_{j,l}$
and on $X_{i,2}=X_{i,3}$ is given by
\begin{align}
P\left(\left.A_{j,l}\right|X_{i},\alpha_{i},C_{j,l},X_{i,2}=X_{i,3}\right) & =\frac{P\left(\left.A_{j,l},A_{j,l}\cup B_{j,l}\right|X_{i},\alpha_{i},X_{i,2}=X_{i,3}\right)}{P\left(\left.A_{j,l}\cup B_{j,l}\right|X_{i},\alpha_{i},X_{i,2}=X_{i,3}\right)}\nonumber \\
 & =\frac{P\left(\left.A_{j,l}\right|X_{i},\alpha_{i},X_{i,2}=X_{i,3}\right)}{P\left(\left.A_{j,l}\right|X_{i},\alpha_{i},X_{i,2}=X_{i,3}\right)+P\left(\left.B_{j,l}\right|X_{i},\alpha_{i},X_{i,2}=X_{i,3}\right)}\nonumber \\
 & =\frac{1}{1+P\left(\left.B_{j,l}\right|X_{i},\alpha_{i},X_{i,2}=X_{i,3}\right)/P\left(\left.A_{j,l}\right|X_{i},\alpha_{i},X_{i,2}=X_{i,3}\right)}.\label{eq:final_ratio}
\end{align}
where the first step follows from the definition of conditional probability;
the second follows from the fact that $A_{j,l}$ and $B_{j,l}$ are
disjoint; the third from division by the probability of $A_{j,l}$.
Plugging the conditional probabilities (\ref{eq:pAjl}) and (\ref{eq:pBjl})
into the final expression (\ref{eq:final_ratio}) in the display below
obtains our final sufficiency result:
\begin{align}
P\left(\left.A_{j,l}\right|X_{i},C_{j,l},X_{i,2}=X_{i,3}\right) & =\frac{1}{1+\exp\left(\Delta X_{i}\beta+\rho\left(d_{0}-d_{3}\right)+\left(1-d_{3}\right)\gamma_{l}+d_{3}\gamma_{j}\right)},\label{eq:sufficiency_A-1}\\
P\left(\left.B_{j,l}\right|X_{i},C_{j,l},X_{i,2}=X_{i,3}\right) & =\frac{\exp\left(\Delta X_{i}\beta+\rho\left(d_{0}-d_{3}\right)+\left(1-d_{3}\right)\gamma_{l}+d_{3}\gamma_{j}\right)}{1+\exp\left(\Delta X_{i}\beta+\rho\left(d_{0}-d_{3}\right)+\left(1-d_{3}\right)\gamma_{l}+d_{3}\gamma_{j}\right)}.\label{eq:sufficiency_B-1}
\end{align}
\end{proof}
\begin{proof}[Proof of Theorem \ref{thm:Identification}]
For notational convenience, we refer to the conditional probability
obtained in our sufficiency result, Theorem \ref{thm:sufficiency},
equation~(\ref{eq:sufficiency_B-1}), as 
\begin{align}
p_{jl}\left(X_{i},d_{0},d_{3}\right) & =P\left(\left.B_{j,l}\right|X_{i},C_{j,l},X_{i,2}=X_{i,3}\right)\nonumber \\
 & =\Lambda\left(\Delta X_{i}\beta+\rho\left(d_{0}-d_{3}\right)+\left(1-d_{3}\right)\gamma_{l}+d_{3}\gamma_{j}\right).\label{eq:sufficiency_ID}
\end{align}
Evaluate this for the case of $j=k=l$ and for $d_{0}=d_{3}=0$,
\[
p_{kk}\left(X_{i},0,0\right)=\Lambda\left(X_{i}\beta\right),
\]
which is a simplification of (\ref{eq:sufficiency_ID}) because the
second and fourth term are zero due to $d_{0}=d_{3}=0$ and the third
term is zero because of the choice of $l$ and the scale normalization,
$\gamma_{l}=\gamma_{k}=0$. Then
\[
\beta=E_{kk,00}\left[\Delta X_{i}^{'}\Delta X_{i}\right]^{-1}E_{kk,00}\left[\Delta X_{i}^{'}\Lambda^{-1}\left(p_{kk}\left(X_{i},0,0\right)\right)\right],
\]
where $E_{jl,00}$ is the expectation conditional on $\left(X_{i},C_{j,l},X_{i,2}=X_{i,3}\right)$
and using the cutoffs $j=l=k$ and starting and ending values $d_{0}=d_{3}=0$.
The invertibility of the first term is due to Assumption \ref{assu:XVariation},
and the second term is well-defined because $p_{kk}$ is bounded away
from 0 and 1 because of the logistic errors. This obtains identification
of $\beta$.

Next, note that
\[
p_{kk}\left(X_{i},1,0\right)=\Lambda\left(\Delta X_{i}\beta+\rho\right).
\]
From this we obtain
\[
\rho=E_{kk,10}\left[\Lambda^{-1}\left(p_{kk}\left(X_{i},1,0\right)\right)-\Delta X_{i}\beta\right],
\]
where $E_{kk,10}$ now uses starting and ending values $d_{0}=1$
and $d_{3}=0$. This identifies the persistence parameter, since $\beta$
was identified previously.

To identify the thresholds $\gamma_{l},\,l>k$, consider that for
all $l>k$: 
\begin{align*}
p_{kl}\left(X_{i},0,0\right) & =\Lambda\left(\Delta X_{i}\beta+\gamma_{l}\right),\\
\gamma_{l} & =E_{kl,00}\left[\Lambda^{-1}\left(p_{kl}\left(X_{i},0,0\right)\right)-\Delta X_{i}\beta\right].
\end{align*}
Finally, to identify the thresholds $\gamma_{j},\,j<k$, consider
that for all $j<k$,
\begin{align*}
p_{jk}\left(X_{i},1,1\right) & =\Lambda\left(\Delta X_{i}\beta+\gamma_{j}\right),\\
\gamma_{j} & =E_{jk,11}\left[\Lambda^{-1}\left(p_{jk}\left(X_{i},1,1\right)\right)-\Delta X_{i}\beta\right].
\end{align*}
\end{proof}
\begin{proof}[Proof of Theorem \ref{thm:asymptotics-CCMLE}.]
\textbf{Consistency. }We will use the fact that the objective function
is concave (demonstrated in the next paragraph). We can therefore
use Theorem 2.7 in Newey and McFadden. That condition (i, identification)
holds is suggested by our identification result in Theorem \ref{thm:Identification}.
The information inequality and Assumption \ref{assu:XVariation} ensure
that identification is not lost when moving to the composite conditional
likelihood function, see also the Hessian below. Condition (iii, pointwise
convergence) follows from a law of large numbers for i.i.d. data. 

To see that the objective function is concave, denote $Z_{ijl}=\left(\Delta X_{i},D_{i,0}-D_{i,3,jl},\left(1-D_{i,3,jl}\right),D_{i,3,jl}\right)$
and $\theta_{jl}=\left(\beta,\rho,\gamma_{l},\gamma_{j}\right)$,
so that 
\begin{equation}
l_{i,jl}\left(\theta_{jl}\right)=C_{i,jl}\left[D_{i,1}\ln\Lambda\left(Z_{i,jl}\theta_{jl}\right)+\left(1-D_{i,1}\right)\ln\left[1-\Lambda\left(Z_{i,jl}\theta_{jl}\right)\right]\right],\label{eq:likelihood_theta}
\end{equation}
so that the score contribution is 
\begin{equation}
s_{i,jl}\left(\theta_{jl}\right)=C_{i,jl}\left[D_{i,1}-\Lambda\left(Z_{i,jl}\theta_{jl}\right)\right]Z_{i,jl}^{'}\label{eq:score_ijl}
\end{equation}
and the contribution to the Hessian is 
\begin{equation}
H_{i,jl}\left(\theta_{jl}\right)=-C_{i,jl}\Lambda\left(Z_{i,jl}\theta_{jl}\right)\left(1-\Lambda\left(Z_{i,jl}\theta_{jl}\right)\right)Z_{i,jl}^{'}Z_{i,jl}.\label{eq:Hessian_ijl}
\end{equation}
It can be seen immediately from (\ref{eq:Hessian_ijl}) that $l_{i,jl}\left(\theta_{jl}\right)$
is concave: $C_{i,jl}\in\left\{ 0,1\right\} $, $\Lambda\left(Z_{i,jl}\theta_{jl}\right)\left(1-\Lambda\left(Z_{i,jl}\theta_{jl}\right)\right)\in\left(0,1\right)$,
and $Z_{i,jl}^{'}Z_{i,jl}$ is positive semi-definite. Because sums
of concave functions are concave, the objective function 
\[
l_{n}\left(\beta,\rho,\gamma\right)=\sum_{2\leq j\leq k\leq l}l_{n,jl}\left(\beta,\rho,\gamma_{j},\gamma_{l}\right)
\]
is concave. This completes the proof of concavity.

\textbf{Asymptotic normality.} To demonstrate asymptotic normality
of the estimator, we will verify the conditions in Theorem 3.1 of
NM94. Condition (i, interior) holds by construction. The fact that
condition (ii, twice CD) holds can be seen by inspecting the expression
of the second derivative in (\ref{eq:Hessian_ijl}). Since the composite
conditional likelihood function is a sum of functions of that form,
it is also twice continuously differentiable. Condition (iii, CLT
for score) holds because standard central limit theorems for i.i.d.
data apply to (\ref{eq:score_ijl}). To see this, note that
\begin{align}
Var\left[s_{i,jl}\left(\theta_{jl,0}\right)\right] & =E\left[s_{i,jl}\left(\theta_{jl,0}\right)s_{i,jl}\left(\theta_{jl,0}\right)^{'}\right]\nonumber \\
 & =E\left[C_{i,jl}\left[D_{i,1}-\Lambda\left(Z_{i,jl}\theta_{jl}\right)\right]^{2}Z_{i,jl}^{'}Z_{i,jl}\right]\nonumber \\
 & =E\left[\left.E\left[C_{i,jl}\left[D_{i,1}-\Lambda\left(Z_{i,jl}\theta_{jl}\right)\right]^{2}Z_{i,jl}^{'}Z_{i,jl}\right]\right|Z_{i,jl},C_{i,jl}\right]\nonumber \\
 & =E\left[C_{i,jl}E\left[\left.D_{i,1}-\Lambda\left(Z_{i,jl}\theta_{jl}\right)^{2}\right|Z_{i,jl},C_{i,jl}\right]Z_{i,jl}^{'}Z_{i,jl}\right]\nonumber \\
 & =E\left[C_{i,jl}\Lambda\left(Z_{i,jl}\theta_{jl}\right)\left[1-\Lambda\left(Z_{i,jl}\theta_{jl}\right)\right]Z_{i,jl}^{'}Z_{i,jl}\right]\nonumber \\
 & \equiv\Sigma_{jl}.\label{eq:score_sjl_noD}
\end{align}
The score for the composite likelihood function reuiqres more notation.
First, note that the score in (\ref{eq:score_sjl_noD}) is for the
parameter $\theta_{jl}$, and is therefore a matrix of dimensions
at most $\left(K+1+2\right)\times\left(K+1+2\right)$ matrix. The
score contribution for the composite likelihood is necessarily a $\left(K+1+\left(J-2\right)\right)\times\left(K+1+\left(J-2\right)\right)$
matrix, with rows and columns of zeros inserted into the location
where parameters in $\theta$ are absent from $\theta_{jl}$ is called
$\Omega_{jl}$. Formally,
\begin{align*}
\tilde{s}_{i,jl}\left(\theta\right) & =\frac{\partial l_{i,jl}\left(\theta_{jl}\right)}{\partial\theta},\\
\Omega_{jl} & \equiv E\left[\tilde{s}_{i,jl}\left(\theta_{0}\right)\tilde{s}_{i,jl}\left(\theta_{0}\right)^{'}\right].
\end{align*}
The variance of the score of the composite conditional log likelihood
function is 
\begin{align}
\Omega & \equiv E\left[\left(\sum_{j,l}\tilde{s}_{i,jl}\left(\theta_{0}\right)\right)\left(\sum_{j,l}\tilde{s}_{i,jl}\left(\theta_{0}\right)\right)^{'}\right]\nonumber \\
 & =\sum_{jl}\Omega_{jl}+\sum_{\left(j,l\right)\neq\left(j',l'\right)}E\left[\tilde{s}_{i,jl}\left(\theta_{0}\right)\tilde{s}_{i,j'l'}\left(\theta_{0}\right)^{'}\right].\label{eq:score_CCML}
\end{align}
That the conditions for a CLT (cf. condition iii in NM94) are satisfied
then follows from the boundedness of $C$ and $\Lambda$, and Assumption
\ref{assu:XVariation}.

Furthermore, note that the Hessian of the $\left(j,l\right)$ contribution
is given by 
\[
E\left[H_{i,jl}\left(\theta_{jl}\right)\right]=-\Sigma_{jl},
\]
which follows immediately from comparing (\ref{eq:score_sjl_noD})
and (\ref{eq:Hessian_ijl}). To obtain a Hessian for the composite
likelihood, we enlarge the dimension of that Hessian by defining
\[
\tilde{H}_{i,jl}\left(\theta\right)=\frac{\partial^{2}l_{i,jl}\left(\theta_{jl}\right)}{\partial\theta\partial\theta^{'}}.
\]
It follows that $E\left[\tilde{H}_{i,jl}\left(\theta_{0}\right)\right]=-\Omega_{jl}$
and 
\begin{equation}
H=-\sum_{j,l}\Omega_{jl}.\label{eq:Hessian_CCML}
\end{equation}
Conditions (iv, v, Hessian) then follow from Assumption \ref{assu:XVariation}.
All conditions in Theorem 3.1 of NM94 hold, and Theorem \ref{thm:asymptotics-CCMLE}
therefore holds.
\end{proof}

\section{Additional empirical results\label{sec:Additional-empirical-results}}

We present a version of our main results in Table \ref{tab:DOFE_results}
that incldues the coefficients on the time-invariant variables, in
Table \ref{tab:DOFE_results-1}.
\begin{table}
\begin{centering}
{\footnotesize{}\hspace*{-1cm}}%
\begin{tabular}[b]{lccccccccc}
\hline 
 & {\footnotesize{}(a)} & {\footnotesize{}(b)} & {\footnotesize{}(c)} & {\footnotesize{}(d)} & {\footnotesize{}(e)} & {\footnotesize{}(f)} & {\footnotesize{}(g)} & {\footnotesize{}(h)} & {\footnotesize{}(i)}\tabularnewline
 & {\footnotesize{}DOLFE} & {\footnotesize{}DOLFE} & {\footnotesize{}DOLFE} & {\footnotesize{}DOLFE} & {\footnotesize{}FEOL} & {\footnotesize{}DOL} & {\footnotesize{}OL} & {\footnotesize{}FELM} & {\footnotesize{}LM}\tabularnewline
 & {\footnotesize{}$h=1$} & {\footnotesize{}$h=0.1$} & {\footnotesize{}$h=10$} & {\footnotesize{}$h=1$} &  &  &  &  & \tabularnewline
\hline 
{\footnotesize{}log(income)} & {\footnotesize{}0.049} & {\footnotesize{}-0.047} & {\footnotesize{}0.059} & {\footnotesize{}0.061} & {\footnotesize{}0.020} & {\footnotesize{}0.340} & {\footnotesize{}0.492} & {\footnotesize{}0.003} & {\footnotesize{}0.194}\tabularnewline
\multirow{1}{*}[103cm]{} & \multirow{1}{*}{{\tiny{}(0.033)}} & {\tiny{}(0.056)} & {\tiny{}(0.029)} & {\tiny{}(0.029)} & {\tiny{}(0.019)} & {\tiny{}(0.004)} & {\tiny{}(0.004)} & {\tiny{}(0.003)} & {\tiny{}(0.002)}\tabularnewline
{\footnotesize{}child} & {\footnotesize{}-0.030} & {\footnotesize{}0.006} & {\footnotesize{}-0.031} & {\footnotesize{}-0.026} & {\footnotesize{}0.021} & {\footnotesize{}0.060} & {\footnotesize{}0.089} & {\footnotesize{}0.002} & {\footnotesize{}0.033}\tabularnewline
 & {\tiny{}(0.051)} & {\tiny{}(0.069)} & {\tiny{}(0.050)} & {\tiny{}(0.049)} & {\tiny{}(0.032)} & {\tiny{}(0.005)} & {\tiny{}(0.005)} & {\tiny{}(0.005)} & {\tiny{}(0.002)}\tabularnewline
{\footnotesize{}married} & {\footnotesize{}0.139} & {\footnotesize{}-0.041} & {\footnotesize{}0.157} & {\footnotesize{}0.130} & {\footnotesize{}0.164} & {\footnotesize{}0.073} & {\footnotesize{}0.141} & {\footnotesize{}0.029} & {\footnotesize{}0.062}\tabularnewline
 & {\tiny{}(0.087)} & {\tiny{}(0.119)} & {\tiny{}(0.086)} & {\tiny{}(0.088)} & {\tiny{}(0.053)} & {\tiny{}(0.007)} & {\tiny{}(0.008)} & {\tiny{}(0.009)} & {\tiny{}(0.003)}\tabularnewline
{\footnotesize{}unemp} & {\footnotesize{}-0.188} & {\footnotesize{}-0.230} & {\footnotesize{}-0.178} &  & {\footnotesize{}-0.196} & {\footnotesize{}-0.242} & {\footnotesize{}-0.308} & {\footnotesize{}-0.033} & {\footnotesize{}-0.127}\tabularnewline
 & {\tiny{}(0.070)} & {\tiny{}(0.110)} & {\tiny{}(0.068)} &  & {\tiny{}(0.038)} & {\tiny{}(0.014)} & {\tiny{}(0.015)} & {\tiny{}(0.007)} & {\tiny{}(0.006)}\tabularnewline
{\footnotesize{}retired} & {\footnotesize{}-0.132} & {\footnotesize{}-0.043} & {\footnotesize{}-0.139} &  & {\footnotesize{}-0.154} & {\footnotesize{}-0.050} & {\footnotesize{}-0.097} & {\footnotesize{}-0.027} & {\footnotesize{}-0.047}\tabularnewline
 & {\tiny{}(0.082)} & {\tiny{}(0.119)} & {\tiny{}(0.080)} &  & {\tiny{}(0.041)} & {\tiny{}(0.010)} & {\tiny{}(0.011)} & {\tiny{}(0.007)} & {\tiny{}(0.004)}\tabularnewline
{\footnotesize{}other} & {\footnotesize{}-0.370} & {\footnotesize{}-0.207} & {\footnotesize{}-0.369} &  & {\footnotesize{}-0.473} & {\footnotesize{}-0.771} & {\footnotesize{}-1.087} & {\footnotesize{}-0.082} & {\footnotesize{}-0.460}\tabularnewline
 & {\tiny{}(0.061)} & {\tiny{}(0.087)} & {\tiny{}(0.061)} &  & {\tiny{}(0.040)} & {\tiny{}(0.010)} & {\tiny{}(0.012)} & {\tiny{}(0.007)} & {\tiny{}(0.005)}\tabularnewline
\hline 
{\footnotesize{}$\rho$} & {\footnotesize{}0.733} & {\footnotesize{}0.723} & {\footnotesize{}0.733} & {\footnotesize{}0.734} &  & {\footnotesize{}1.987} &  &  & \tabularnewline
 & {\tiny{}(0.020)} & {\tiny{}(0.025)} & {\tiny{}(0.020)} & {\tiny{}(0.017)} &  & {\tiny{}(0.023)} &  &  & \tabularnewline
{\footnotesize{}$\gamma_{2}$} & {\footnotesize{}-3.275} & {\footnotesize{}-3.260} & {\footnotesize{}-3.272} & {\footnotesize{}-3.211} & {\footnotesize{}-3.487} & {\footnotesize{}-2.506} & {\footnotesize{}-1.992} &  & \tabularnewline
 & {\tiny{}(0.054)} & {\tiny{}(0.068)} & {\tiny{}(0.053)} & {\tiny{}(0.048)} & {\tiny{}(0.015)} & {\tiny{}(0.007)} & {\tiny{}(0.006)} &  & \tabularnewline
{\footnotesize{}$\gamma_{4}$} & {\footnotesize{}3.326} & {\footnotesize{}3.356} & {\footnotesize{}3.329} & {\footnotesize{}3.321} & {\footnotesize{}3.997} & {\footnotesize{}3.089} & {\footnotesize{}2.603} &  & \tabularnewline
 & {\tiny{}(0.055)} & {\tiny{}(0.076)} & {\tiny{}(0.055)} & {\tiny{}(0.054)} & {\tiny{}(0.024)} & {\tiny{}(0.006)} & {\tiny{}(0.006)} &  & \tabularnewline
\hline 
{\footnotesize{}male} &  &  &  &  &  & {\footnotesize{}0.126} & {\footnotesize{}0.180} &  & {\footnotesize{}0.070}\tabularnewline
 &  &  &  &  &  & {\tiny{}(0.006)} & {\tiny{}(0.007)} &  & {\tiny{}(0.003)}\tabularnewline
{\footnotesize{}urban\_mid} &  &  &  &  &  & {\footnotesize{}-0.005} & {\footnotesize{}-0.018} &  & {\footnotesize{}-0.006}\tabularnewline
 &  &  &  &  &  & {\tiny{}(0.008)} & {\tiny{}(0.009)} &  & {\tiny{}(0.004)}\tabularnewline
{\footnotesize{}urban\_low} &  &  &  &  &  & {\footnotesize{}0.048} & {\footnotesize{}0.021} &  & {\footnotesize{}0.006}\tabularnewline
 &  &  &  &  &  & {\tiny{}(0.007)} & {\tiny{}(0.008)} &  & {\tiny{}(0.003)}\tabularnewline
{\footnotesize{}age $]25;35]$} &  &  &  &  &  & {\footnotesize{}-0.613} & {\footnotesize{}-0.772} &  & {\footnotesize{}-0.282}\tabularnewline
 &  &  &  &  &  & {\tiny{}(0.015)} & {\tiny{}(0.016)} &  & {\tiny{}(0.005)}\tabularnewline
{\footnotesize{}age $]35;45]$} &  &  &  &  &  & {\footnotesize{}-1.047} & {\footnotesize{}-1.349} &  & {\footnotesize{}-0.496}\tabularnewline
 &  &  &  &  &  & {\tiny{}(0.015)} & {\tiny{}(0.016)} &  & {\tiny{}(0.005)}\tabularnewline
{\footnotesize{}age $]45;55]$} &  &  &  &  &  & {\footnotesize{}-1.423} & {\footnotesize{}-1.938} &  & {\footnotesize{}-0.722}\tabularnewline
 &  &  &  &  &  & {\tiny{}(0.015)} & {\tiny{}(0.015)} &  & {\tiny{}(0.005)}\tabularnewline
{\footnotesize{}age $]55;65]$} &  &  &  &  &  & {\footnotesize{}-1.392} & {\footnotesize{}-2.016} &  & {\footnotesize{}-0.747}\tabularnewline
 &  &  &  &  &  & {\tiny{}(0.016)} & {\tiny{}(0.017)} &  & {\tiny{}(0.006)}\tabularnewline
{\footnotesize{}age $]65;\infty[$} &  &  &  &  &  & {\footnotesize{}-1.549} & {\footnotesize{}-2.278} &  & {\footnotesize{}-0.854}\tabularnewline
 &  &  &  &  &  & {\tiny{}(0.017)} & {\tiny{}(0.019)} &  & {\tiny{}(0.007)}\tabularnewline
{\footnotesize{}Primary schooling} &  &  &  &  &  & {\footnotesize{}0.385} & {\footnotesize{}0.510} &  & {\footnotesize{}0.204}\tabularnewline
 &  &  &  &  &  & {\tiny{}(0.028)} & {\tiny{}(0.031)} &  & {\tiny{}(0.013)}\tabularnewline
{\footnotesize{}Lower secondary} &  &  &  &  &  & {\footnotesize{}0.516} & {\footnotesize{}0.737} &  & {\footnotesize{}0.302}\tabularnewline
 &  &  &  &  &  & {\tiny{}(0.029)} & {\tiny{}(0.031)} &  & {\tiny{}(0.013)}\tabularnewline
{\footnotesize{}Upper secondary} &  &  &  &  &  & {\footnotesize{}0.694} & {\footnotesize{}0.952} &  & {\footnotesize{}0.384}\tabularnewline
 &  &  &  &  &  & {\tiny{}(0.028)} & {\tiny{}(0.031)} &  & {\tiny{}(0.012)}\tabularnewline
{\footnotesize{}Post-secondary} &  &  &  &  &  & {\footnotesize{}0.717} & {\footnotesize{}0.931} &  & {\footnotesize{}0.377}\tabularnewline
 &  &  &  &  &  & {\tiny{}(0.031)} & {\tiny{}(0.034)} &  & {\tiny{}(0.014)}\tabularnewline
{\footnotesize{}Tertiary} &  &  &  &  &  & {\footnotesize{}0.890} & {\footnotesize{}1.233} &  & {\footnotesize{}0.490}\tabularnewline
 &  &  &  &  &  & {\tiny{}(0.029)} & {\tiny{}(0.031)} &  & {\tiny{}(0.013)}\tabularnewline
\hline 
\end{tabular}{\footnotesize\par}
\par\end{centering}
\caption{Main results.}

\centering{}\label{tab:DOFE_results-1}
\end{table}
 In the main text, we established that controlling for unobserved
heterogeneity is important, so we should be careful in interpreting
results from models without fixed effects (columns f, g, i). We find
that men demonstrate higher levels of self-reported health (Caroli
and Weber-Baghdiguian, 2016; Bago d\textquoteright Uva et al., 2008),
and those living in rural areas are also more likely to report better
health (Lindeboom and van Doorslaer, 2004). Results also show that
age is negatively related to self-reported health (Bago d\textquoteright Uva
et al., 2008; Lindeboom and van Doorslaer, 2004); and that individuals
with higher levels of education report better health (Conti et al.,
2010).
\end{document}